\theoremstyle{plain}
  \newtheorem{theorem}{Theorem}[section]
  \newtheorem{corollary}[theorem]{Corollary}
  \newtheorem{proposition}[theorem]{Proposition}
\theoremstyle{definition}
\theoremstyle{remark}
  \newtheorem{example}[theorem]{Example}
 \let\be=\beta \let\de=\delta
\let\De=\Delta \let\Ga=\Gamma  
\let\Th=\Theta
\newcommand{\caD}{{\mathcal D}}
\newcommand{\caG}{{\mathcal G}}
\newcommand{\caO}{{\mathcal O}}
\newcommand{\caT}{{\mathcal T}}
\newcommand{\bbR}{{\mathbb R}}
\newcommand{\bbZ}{{\mathbb Z}}
\newcommand{\opunit}{\text{1}\kern-0.22em\text{l}}
\DeclareMathAlphabet{\mathpzc}{OT1}{pzc}{m}{it}
\begin{document}

\title{Heat bounds and the blowtorch theorem}

\author{Christian Maes}
\email{christian.maes@kuleuven.ac.be}
\affiliation{Instituut voor Theoretische Fysica, KU Leuven}
\author{Karel Neto\v{c}n\'{y}}
\email{netocny@fzu.cz}
\affiliation{Institute of Physics, Academy of Science, Czech Republic}

\begin{abstract}
We study driven systems with possible population inversion and we give optimal bounds on the relative occupations in terms of released heat.  A precise meaning to Landauer's blowtorch theorem (1975) is obtained stating that nonequilibrium occupations are essentially modified by kinetic effects.  Towards very low temperatures we apply a Freidlin-Wentzel type analysis for continuous time Markov jump processes.  It leads to a definition of dominant states in terms of both heat and escape rates.
\end{abstract}

%\date{\today}

\maketitle
\section{Stochastic network}
The construction of nonequilibrium statistical mechanics requires to develop general concepts and mathematical techniques for dealing with driven systems.  Useful inspiration can already be obtained from the simplest systems undergoing a Markov evolution.  A well-known example is Schnakenberg's network theory where affinities are defined from the cycles of the graph associated with a Markov jump process~\cite{Sch,qia}. A much more recent example maps nonequilibrium fluxes onto a Markov process on a cycle space~\cite{alt}.  Still far from treating phase transitions, we feel that a nonequilibrium theory must include extensions of the Gibbs formalism where
besides heat and energy, also other kinetic aspects are crucial for the evaluation of state occupations.  The present paper provides rigorous information on that question, first by giving new heat bounds on the stationary occupations in Section \ref{sec:heatbounds} and continuing in Section \ref{blowsec} by mathematically stating in the framework of Markov jump processes an insight of Landauer, known as the blowtorch theorem \cite{land}.  We end in Section~\ref{lowt} with a low temperature analysis to develop the notion of dominant states.  There we meet most prominently the two contributions, one referring to the ``life-time'' of states and the other to their ``accessibility'', that determine the (asymptotic) stationary occupations outside equilibrium.  The main method is to represent the Markov jump process as a stochastic network and to apply a flow-graph or Kirchhoff formula for the stationary distribution.  This formula is well-known and was also used in the evaluation of stochastic stability in spatial games, see e.g.~\cite{schu,jac}.\\

We consider a graph  $\caG$ with nodes $x,y,\ldots \in K$ that represent physical states on some reduced level of description such as the collection of particle positions or spins or chemo-mechanical configurations of molecules.   Between any two states are associated the
heat $q(x,y) = -q(y,x) \in \bbR$ and the activation $\psi(x,y) = \psi(y,x) \geq 0$, where the words refer to their physical meaning in an Arrhenius-formula for transition rates $k(x,y)$.  More specifically we introduce the Markov jump process with rates
\begin{equation}\label{tra}
k(x,y) := \psi(x,y)\,\exp \Bigl[ \frac{\beta}{2} q(x,y) \Bigr]
\end{equation}
where $\beta\geq 0$ is called the inverse temperature of the environment.  In that way, the process satisfies the local detailed balance condition, \cite{kls,hal}
\begin{equation}\label{ldb}
  \log\frac{k(x,y)}{k(y,x)} = \be q(x,y)
\end{equation}
with $\beta q(x,y)$ the entropy flux under the jump $x\rightarrow y$.
On the other hand, the escape rates $\xi(x) := \sum_y k(x,y)$  depend also on the prefactors $\psi(x,y)$.  The edges of $\caG$ are made by the pairs $(x,y)$ over which $\psi(x,y) > 0$.  To these values we also refer as determining the kinetics of the process.   Physically that also depends on the temperature,
$\psi(x,y) = \psi(x,y;\beta)$, but we will postpone to make that dependence more explicit till Section~\ref{lowt}.\\

For mathematical simplicity we assume that the state space $K$ is finite and that the Markov process defined above is irreducible (and hence also time-translation ergodic in its stationary regime).  The backward generator is
$(L g)(x) := \sum_y k(x,y)\,[g(y) - g(x)]$, and the unique stationary probability distribution $\rho>0$ on $K$ is characterized by  $\sum_x \rho(x) (Lg)(x) =0$ for all functions $g$.  That is equivalent to the stationary Master equation. $\sum_y [\rho(x)k(x,y) - \rho(y)k(y,x)] = 0$ for all $x$.  We also refer to $\rho(x)$ as the stationary occupations and we are interested to understand the behavior of these occupations in terms of the heat $\{q(x,y)\}$, activation $\{\psi(x,y)\}$ and inverse temperature $\beta$.\\

A useful representation of the stationary distribution is in terms of the Kirchhoff formula, \cite{schu,jac,bol}:
\begin{equation}\label{kirchhoff}
  \rho(x) = \frac{W(x)}{\sum_y W(y)}\,,\qquad
  W(x) := \sum_\caT w(\caT_x)
\end{equation}
in which the last sum runs over all spanning trees in the graph $\caG$ and $\caT_x$ denotes the in-tree to $x$ defined for any tree $\caT$ and state $x$ by orienting every edge in $\caT$ towards $x$; its weight $w(\caT_x)$ is
\begin{equation}\label{kirchhoff1}
  w(\caT_x) := \prod_{b \in \caT_x} k(b)
\end{equation}
i.e.,\ the product of transition rates $k(b) \equiv k(y,z)$ over all oriented edges
$b \equiv (y,z)$ in the in-tree $\caT_x$. (To simplify notation, we identify the graph with the set of its edges. We refer to \cite{bol} for the necessary elements of graph theory.)

\section{Heat bounds on the stationary occupations}\label{sec:heatbounds}

For any oriented path $\caD$ along the edges of the graph $\caG$ we write $q(\caD)$ for the total dissipated heat along $\caD$, i.e.,
$q(\caD) := \sum_b q(b)$. For any
spanning tree $\caT$ we also define $\caT_{xy}$ as the (unique) oriented path from $x$ to $y$ along the edges of $\caT$.

\begin{proposition}
For all states $x,y\in K$, under the stationary distribution $\rho$,
\begin{equation}\label{occupation-rel}
  \frac{\rho(x)}{\rho(y)} =
  \frac{\sum_{\caT} e^{-\be q(\caT_{xy})} w(\caT_y)}{\sum_{\caT} w(\caT_y)}
\end{equation}
where the sums are over all spanning trees in the graph $\caG$.
\end{proposition}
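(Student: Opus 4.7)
The plan is to start from the Kirchhoff representation \eqref{kirchhoff}, reducing the target identity to the statement
\[
  W(x) \;=\; \sum_\caT e^{-\be q(\caT_{xy})}\,w(\caT_y),
\]
summed over the same family of spanning trees that appears in $W(y)$. So the whole proof amounts to comparing, for each fixed unoriented spanning tree $\caT$, the two in-trees $\caT_x$ and $\caT_y$ obtained by orienting $\caT$ towards $x$ or towards $y$.

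The key combinatorial observation is that $\caT_x$ and $\caT_y$ share the same underlying edges, and they differ only in the orientation of the edges lying on the unique undirected path connecting $x$ and $y$ in $\caT$: those edges point towards $y$ in $\caT_y$ but are reversed in $\caT_x$, while every other edge keeps its orientation. Consequently, using \eqref{kirchhoff1},
\[
  \frac{w(\caT_x)}{w(\caT_y)} \;=\; \prod_{(u,v)\in \caT_{xy}} \frac{k(v,u)}{k(u,v)},
\]
where the product runs along the oriented path $\caT_{xy}$ from $x$ to $y$ inside $\caT$.

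At this point local detailed balance \eqref{ldb} does the physical work: each factor equals $e^{-\be q(u,v)}$, and $q$ is additive along the path, so the product collapses to $e^{-\be q(\caT_{xy})}$. Thus $w(\caT_x) = e^{-\be q(\caT_{xy})}\,w(\caT_y)$ tree by tree; summing over $\caT$ and dividing by $W(y)$ yields \eqref{occupation-rel}.

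The only real point of care, and in my view the single nontrivial step, is the combinatorial claim about the two orientations of $\caT$: one must check that redirecting a tree from root $y$ to root $x$ reverses precisely the edges on the path $\caT_{xy}$ and leaves all others untouched. This is an elementary property of in-trees on a finite undirected tree (each non-root vertex has a unique outgoing edge in an in-tree, namely the first edge on its path to the root), and once it is granted the rest of the argument is purely mechanical.
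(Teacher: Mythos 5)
Your proof is correct and follows essentially the same route as the paper: both reduce the identity to the tree-by-tree relation $w(\caT_x) = e^{-\be q(\caT_{xy})}\,w(\caT_y)$, obtained by observing that the two in-trees differ only in the orientation of the edges along the path between $x$ and $y$ and then applying local detailed balance edge by edge. The combinatorial point you flag (each non-root vertex's unique outgoing edge is the first edge on its path to the root, so only the path edges flip) is exactly the decomposition the paper uses.
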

\begin{proof}

For any tree $\caT$ and states $x, y$ we decompose the corresponding in-trees (as sets of oriented edges) as
$\caT_x = \caT_{y x} \cup (\caT_x \setminus \caT_{y x})$ and
$\caT_y = \caT_{x y} \cup (\caT_y \setminus \caT_{x y})$, respectively. Observing that
(i) $\caT_{x y} = -\caT_{y x}$ (the orientation of all edges inverted) and (ii)
$\caT_x \setminus \caT_{y x} = \caT_y \setminus \caT_{x y}$ (the orientation of all edges kept unchanged), the local detailed balance~\eqref{ldb} yields
\begin{equation}
  \frac{w(\caT_x)}{w(\caT_y)} = e^{-\be q(\caT_{xy})}
\end{equation}
This being applied in~\eqref{kirchhoff}--\eqref{kirchhoff1} proves~\eqref{occupation-rel}.
\end{proof}

The representation~\eqref{occupation-rel} immediately implies
\begin{corollary}\label{cor:bounds}
The relative stationary occupations satisfy
\begin{equation}\label{occupation-bounds}
  \min_{y \stackrel{\!\!\caD}{\rightarrow x}} q(\caD) \leq
  \frac{1}{\be} \log\frac{\rho(x)}{\rho(y)} \leq
  \max_{y \stackrel{\!\!\caD}{\rightarrow x}} q(\caD)
\end{equation}
with the minimum and maximum taken over all oriented paths (i.e., self-avoiding walks) from $y$ to $x$
in $\caG$.  When either one of the two inequalities is an equality, then both inequalities become equalities and the minimum coincides with the maximum.
\end{corollary}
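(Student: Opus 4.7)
The plan is to read~\eqref{occupation-rel} as a convex combination and apply the elementary fact that a weighted average with positive weights lies between the minimum and maximum of the quantity being averaged. First, using $q(\caT_{xy}) = -q(\caT_{yx})$, rewrite the Proposition as
\begin{equation*}
  \frac{\rho(x)}{\rho(y)} = \sum_{\caT} \lambda(\caT)\, e^{\be q(\caT_{yx})},
  \qquad \lambda(\caT) := \frac{w(\caT_y)}{\sum_{\caT'} w(\caT'_y)}.
\end{equation*}
Because $\caG$ is the graph of all edges with $\psi>0$, every oriented edge in every spanning tree has a strictly positive rate, hence $w(\caT_y)>0$ for each spanning tree $\caT$ and the $\lambda(\caT)$ form a probability distribution on the (finite) set of spanning trees.

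Next I would convert the bound over spanning trees into a bound over oriented self-avoiding walks. For each spanning tree $\caT$, $\caT_{yx}$ is by definition an oriented self-avoiding walk from $y$ to $x$ in $\caG$. Conversely, any self-avoiding walk $\caD$ from $y$ to $x$ uses edges of $\caG$ that form a subtree, and since $\caG$ is connected this subtree extends to a full spanning tree $\caT$ in which $\caT_{yx}=\caD$. Therefore
\begin{equation*}
  \{\caT_{yx}\rel \caT \text{ spanning tree of } \caG\}
  = \{\caD\rel \caD \text{ oriented self-avoiding walk from } y \text{ to } x\}.
\end{equation*}
Combining this with the convex-combination estimate and taking $(1/\be)\log$ yields the two inequalities in~\eqref{occupation-bounds}.

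For the equality statement I would argue as follows. If the upper (resp. lower) inequality in~\eqref{occupation-bounds} is saturated, then the weighted average of $e^{\be q(\caT_{yx})}$ equals its maximum (resp. minimum). Since all weights $\lambda(\caT)$ are strictly positive, this forces $e^{\be q(\caT_{yx})}$ to be constant over all spanning trees $\caT$, so $q(\caD)$ takes the same value on every self-avoiding walk $\caD$ from $y$ to $x$, making minimum and maximum coincide and the other inequality automatically an equality as well.

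The only genuinely non-routine point is the identification of $\{\caT_{yx}\}$ with all self-avoiding walks from $y$ to $x$; everything else is a direct consequence of~\eqref{occupation-rel} and positivity of the tree weights. Even this point is a standard fact from graph theory (a forest in a connected graph can be completed to a spanning tree), so no real obstacle is anticipated.
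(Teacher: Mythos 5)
Your proof is correct and is exactly the argument the paper intends: the paper offers no explicit proof beyond saying that~\eqref{occupation-rel} ``immediately implies'' the bounds, and the intended reading is precisely your convex-combination estimate with strictly positive tree weights, together with the (standard but worth stating) identification of $\{\caT_{yx}\}$ over spanning trees with the set of oriented self-avoiding walks from $y$ to $x$. Your treatment of the equality case via strict positivity of the weights is also the right one.
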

In words, the  relative occupations $\rho(x)/\rho(y)$ can be estimated via the heat along the most- and the least- dissipative paths connecting $x$ and $y$. Note that in the case of \emph{global} detailed balance $q(x,y) = E(x)- E(y)$ in \eqref{tra}--\eqref{ldb} for some energy function $E(x)$, and the heat $q(\caD) = E(x) - E(y)$ only depends on the initial and final configurations $x,y$ of the path $\caD$.
Then, the lower and upper bounds coincide for all pairs $x, y$ and~\eqref{occupation-bounds} reproduces the Boltzmann equilibrium statistics for $\rho$.

\section{Blowtorch theorem}\label{blowsec}
The heat function $q(x,y)$ provides a partial ordering in the following sense.
We write $x \succ y$ (respectively $x \succeq y$) whenever
$q(\caD) > 0$ (respectively $q(\caD) \geq 0$)
along all oriented paths $\caD$ from $y$ to $x$; one readily checks the transitivity condition for partial order. By Corollary~\ref{cor:bounds},
$x \succeq y$ implies $\rho(x) \geq \rho(y)$. Moreover, if
$x \succ y$ then $\rho(y)/\rho(x)$ converges to zero in the zero-temperature limit
$\be \to +\infty$, exponentially with rate bounded from below through the least-dissipation path from $y$ to $x$. Sometimes (typically in regimes not very far from equilibrium), this can be enough in order to determine the asymptotically most populated configuration(s) analogous to equilibrium ``ground states'', as well as to get bounds on the leading excitations. However, in general, considerations based only on heat may become insufficient, and the next section describes exactly what information is needed.

Both relations $x \succeq y$ and $y \succeq x$ are obeyed simultaneously only if $q(\caD) = 0$ for all paths connecting $x$ and $y$, in which case
$\rho(x) = \rho(y)$. On the other hand, any two configurations $x$ and $y$ are \emph{incomparable} in the sense that neither $x \succeq y$ nor $x \preceq y$, if and only if there exist two oriented paths $\caD_{1,2}$ from $x$ to $y$ such that $q(\caD_1) < 0 < q(\caD_2)$. In this case it cannot be decided which of the occupations
$\rho(x)$, $\rho(y)$ is larger on the basis of the heat functions $q(x,y)$ only and the symmetric components of the transition rates become essential.
In fact, any chosen path from $x$ to $y$ can be arbitrarily enhanced (or all the remaining ones arbitrarily suppressed) by suitably adjusting these symmetric parts, cf.~\eqref{occupation-rel},
%proving
indicating that either of the two inequalities between
$\rho(x)$ and $\rho(y)$ can indeed be attained.
This is the essence of the so called ``blowtorch theorem,'' heuristically introduced
by Landauer~\cite{land} to argue that entropy production (variational) principles cannot have a general validity. We can now formalize that as
\begin{proposition}[Blowtorch theorem]\label{pro:blowtorch}
Whenever for a pair of states $x^*$ and $y^*$ neither
$x^*\succeq y^*$ nor $y^* \succeq x^*$ holds, then without changing the heat function $\{q(x,y)\}$, we can  always make either $\rho(x^*) > \rho(y^*)$ or
$\rho(x^*) < \rho(y^*)$ by changing the kinetics $\{\psi(x,y)\}$.
\end{proposition}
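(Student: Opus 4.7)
The plan is to exploit the Kirchhoff representation \eqref{occupation-rel} by engineering the activations $\{\psi(x,y)\}$ so that the spanning-tree sums localize on a single tree of our choosing, while keeping the heat function $\{q(x,y)\}$ untouched. By the hypothesis of incomparability (and the identification of oriented paths with self-avoiding walks, cf.\ Corollary~\ref{cor:bounds}), there exist self-avoiding oriented paths $\caD_+$ and $\caD_-$ from $y^*$ to $x^*$ in $\caG$ with $q(\caD_+) > 0$ and $q(\caD_-) < 0$. Each such path is acyclic as a set of unoriented edges, hence extends to a spanning tree $\caT_\pm$ of the connected graph $\caG$. Crucially, the unique path connecting $x^*$ to $y^*$ through $\caT_\pm$ equals $\caD_\pm$ with reversed orientation, so $q((\caT_\pm)_{x^*y^*}) = -q(\caD_\pm)$.

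To force $\rho(x^*)>\rho(y^*)$, choose the kinetics $\psi_\ep(b) = 1$ for edges $b$ of $\caT_+$ and $\psi_\ep(b) = \ep$ for the remaining edges of $\caG$, where $\ep>0$ is small. The heat function is unchanged and the process remains irreducible. In the Kirchhoff weight $w(\caT_{y^*}) = \prod_{b\in\caT_{y^*}} \psi_\ep(b)\,e^{\be q(b)/2}$ the exponential factors are bounded uniformly in $\ep$, while the activation factor equals $1$ when $\caT=\caT_+$ and carries at least one factor of $\ep$ for every other spanning tree $\caT$. Applying \eqref{occupation-rel} and passing to the limit gives
\begin{equation}
  \frac{\rho(x^*)}{\rho(y^*)} \xrightarrow[\ep\to 0^+]{} e^{-\be q((\caT_+)_{x^*y^*})} = e^{\be q(\caD_+)} > 1,
\end{equation}
so for any sufficiently small $\ep$ we have $\rho(x^*)>\rho(y^*)$. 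Replacing $\caT_+$ by $\caT_-$ in the same construction yields $\rho(x^*)/\rho(y^*) \to e^{\be q(\caD_-)} < 1$ and hence the reverse strict inequality.

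The only nontrivial ingredients are the two combinatorial observations above: that incomparability supplies self-avoiding paths of both heat signs (immediate from the paper's convention on paths), and that any self-avoiding path extends to a spanning tree of a connected graph (an elementary matroid/graph-theoretic fact). Everything else is bookkeeping in the Kirchhoff sum; I would expect the only place where a reader might stumble is in verifying that the heat-dependent prefactors $e^{\be q(b)/2}$ are genuinely uniformly bounded in $\ep$, which is clear because they depend only on the fixed edge heats $q(b)$, not on the chosen $\psi_\ep$.
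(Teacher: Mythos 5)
Your proposal is correct and follows essentially the same route as the paper's proof: extract from incomparability a path from $y^*$ to $x^*$ of the desired heat sign, extend it to a spanning tree, set $\psi=1$ on that tree and $\psi=\ep$ elsewhere, and let the single tree dominate the Kirchhoff sums in \eqref{occupation-rel} as $\ep\to 0^+$. You merely make explicit a few details the paper leaves implicit (the extension of a self-avoiding path to a spanning tree, the exact limit $e^{\be q(\caD_\pm)}$ of the occupation ratio, and the preservation of irreducibility).
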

\begin{proof}
Assuming the hypothesis is true, there exists a path $\caD$ from state $y^*$ to $x^*$ such that $q(\caD) > 0$. Let $\caT$ be a spanning tree such that
$\caT_{y^* x^*} = \caD$ and define
$\psi(b) = 1$ for all $b \in \caT$ and $\psi(b) = \de > 0$ otherwise.
Then for $\de$ small enough the tree $\caT$ provides a dominant contribution to the tree-sums in~\eqref{occupation-rel} such that
$\rho(x^*) / \rho(y^*) > e^{\be q(\caD)/2}  > 1$. The existence of $\psi$ such that
$\rho(x^*) / \rho(y^*) < 1$ follows by the same argument.
\end{proof}

\begin{example}\label{ex:walk}
Let us consider a random walk on the ring $\{1,2,\ldots,N\}$ where we identify $N+1=1$.  The transition rates are taken
\[
k(x,x+1) = p_x\,e^{\frac{\beta}{2} q_x},\quad k(x+1,x) = p_x\,e^{-\frac{\beta}{2} q_x},\qquad x\in \bbZ_N
\]
for $p_x>0$ and total heat $Q = \sum_{x=1}^N q_x$ along the ring.  Global detailed balance implies $Q=0$; here we assume $Q>0$.
Between any $x,y\in \bbZ_N$ there is a positively (clockwise) and a negatively (anti-clockwise) oriented path, $\caD^+(x,y)$ respectively $\caD^-(x,y)$, with corresponding heat along these paths satisfying
\[
q(\caD^+(x,y)) = -q(\caD^-(y,x)),\qquad q(\caD^+(x,y))+ q(\caD^+(y,x)) =  Q
\]
In particular, $q(\caD^+(x,x+1))=q_x$ and
$q(\caD^-(x,x+1)) = q_x - Q$.
Hence, the sites $x$ and $x+1$ can be heat-ordered ($\succeq$) provided
both heat quanta $q(\caD^\pm(x,x+1))$ have the same sign, i.e.\ for
$q_x \not\in (0,Q)$. If this condition is verified for all sites $x$ then the heat-order becomes \emph{complete}, extending the detailed balance case
($Q = 0$) where the completeness is obvious. Remark that whenever $q_x \geq 0$ for \emph{all} sites $x$ then the heat-order can only be \emph{partial} since our complete-order condition is then in contradiction with the constraint
$\sum_x q_x = Q > 0$. Furthermore, in a diffusion regime where $q_x = \caO(1/N)$ while keeping $Q = \caO(1)$, the heat-order only applies for the neighboring sites $(x,x+1)$ such that $q_x \leq 0$.

To illustrate the blowtorch theorem, Proposition~\ref{pro:blowtorch}, we take the specific case $N = 3$.
The stationary occupations follow~\eqref{occupation-rel},
\[
\frac{\rho(1)}{\rho(2)} = \frac{e^{-\beta q_1} \,p_1\, p_2 \,e^{\frac{\beta}{2} (q_1-q_2)} +
e^{\beta (q_2+q_3)} \,p_2\, p_3 \,e^{-\frac{\beta}{2} (q_2+q_3)} + e^{-\beta q_1} \,p_1\, p_3 \,e^{\frac{\beta}{2} (q_1+q_3)}}{p_1\, p_2 \,e^{\frac{\beta}{2} (q_1-q_2)} +  p_2\, p_3 \,e^{-\frac{\beta}{2} (q_2+q_3)} + p_1\, p_3 \,e^{\frac{\beta}{2} (q_1+q_3)}}
\]
\emph{Non-ambiguous case.}
It is immediate that when $q_2 + q_3 = -q_1$ (detailed balance), then
$\rho(1) = e^{-\beta q_1}\,\rho(2)$.  But we also see that if
$q(\caD^+(1,2)) = q_1\geq 0$ and $q(\caD^-(1,2)) = -q_3-q_2 \geq 0$, then still $\rho(1) \leq \rho(2)$, independent of the $p_{1,2,3}$. Here the heat released along both paths from $1$ to $2$ is always nonnegative and that is why the occupation of $1$ cannot exceed the occupation of $2$.\\
\emph{Ambiguous case.}
Take now both $q_1 > 0$ and $q_2 + q_3 > 0$ as well.  Then, by taking $p_1$ very small we have $\rho(1) > \rho(2)$ while for $p_3$ very small we find $\rho(1) < \rho(2)$.
In this case the heats along the two paths from $1$ to $2$ have a different sign, and hence the higher-occupied state cannot be determined without knowing the detailed kinetics $\psi$ (and both possibilities can indeed be designed).
\end{example}

If there are some \emph{a priori} energy levels
$E(1) < E(2) < \ldots$ assigned to the states so that the heat functions take the form
$q(x,y) = E(x) - E(y) + F(x,y)$ where $F$ accounts for the nonequilibrium driving, then the situation where, e.g., $\rho(1) < \rho(2)$ is usually called a population inversion. We have seen that a necessary condition for this effect to occur is the absence of heat-order between the states $1$ and $2$.

Of course the fact that the $\{\psi(x,y)\}$ \emph{also} determine the stationary distribution is rather obvious from the mathematical point of view.
What is in fact more strange is that at detailed balance (and in the neighborhood of it) heat and heat alone plays such a dominant role, and determines on its own the stationary occupations.  It is then the challenge of nonequilibrium physics to identify a complement to heat and entropy for characterizing the steady behavior. In other words, to understand exactly what aspect of the kinetics matter.  Such studies are under way for different aspects of nonequilibrium theory, including linear response relations and  dynamical fluctuation theory where the concept of activity, traffic or frenesy have been used to tag indeed the relevant kinetic features, e.g. in \cite{mnw,mw}.\\

The original Landauer's blowtorch example considers a particle moving over an energy barrier, both sides of which are being heated up differently~\cite{land}. That somewhat differs from our isothermal stochastic framework but we restrict to the latter to summarize Landauer's idea into a concise mathematical statement. Other results, more vaguely related to the original concept are also known, e.g., the possibility to achieve an arbitrary stationary occupation distribution $\mu >0$ by adding a potential $V_\mu = V$ to the heat function, $q(x,y) \mapsto q(x,y) + V(x) - V(y)$ while keeping the kinetics fixed, plays an essential role in the Donsker-Varadhan large deviation theory~\cite{dv}; the extra potential $V$ was also called a ``blowtorch'' in~\cite{mnw}.

\section{Low-temperature asymptotics}\label{lowt}

To get more insight into how different aspects of the dynamics enter the structure of the stationary distribution, we next look into its asymptotics at low temperatures. We apply a variant of Freidlin and Wentzell analysis for small noise~\cite{fw} adapted to continuous-time Markov chains. Mathematically there are no new ideas, but the continuous--time extension is relevant for physical applications. It adds the life--time of states as another component in the stationary occupations complementing the accessibility of states within the embedded discrete-time network. A possible frustration between both components then leads to a variety of low-temperature patterns.
%The present section makes a type of Freidlin and
%Wentzell analysis, or perturbation analysis for small noise \cite{fw}.  Mathematically, there are no basic new ideas, but the treatment of {\it continuous} time jump processes adds to the physics. 
%As can be expected from the previous section, the life-time of states and their accessibility become major physical parameters. That is why it becomes important to specify how the inverse temperature $\be$ enters the activation factors $\psi(x,y) = \psi(x,y;\be)$.  

We add the assumption that the transition rates $k(x,y) = k(x,y;\be)$ have the zero-temperature (logarithmic) limit
\begin{equation}\label{assumption-LD}
  \phi(x,y) := \lim_{\be \to +\infty} \frac{1}{\be} \log k(x,y;\be), \qquad x,y\in K
\end{equation}
abbreviated as $k(x,y) \asymp e^{\be \phi(x,y)}$. Since by local detailed balance~\eqref{ldb} we know that
$\phi(x,y) - \phi(y,x) = q(x,y)$ with the heat $q(x,y)$ temperature-independent, the added premise is thus that $\frac{1}{\beta} \log \psi(x,y;\beta)
\rightarrow \phi(x,y) - q(x,y)/2$ has a well defined limit when $\be \to +\infty$.
The escape rates have the asymptotics
$\xi(x) = \sum_y k(x,y) \asymp e^{-\be \Ga(x)}$ with
$\Ga(x) := -\max_y \phi(x,y)$ specifying the typical life-time of state $x$ as
$\tau(x) \simeq e^{\be\Ga(x)}$. It is useful to decompose
$-\phi(x,y) =: \Ga(x) + U(x,y)$ defining $U(x,y) \geq 0$.
\begin{proposition}
The stationary distribution has the exponential asymptotics
\begin{equation}\label{stab}
  \lim_{\be \to +\infty} \frac{1}{\be} \log \rho(x) = \tilde\Psi(x) - \max_x \tilde\Psi(x)
  \end{equation}
  with
\begin{equation}\label{stab1}
\tilde\Psi(x)  := \Ga(x) - \min_\caT U(\caT_x)
\end{equation}
for  $U(\caT_x) := \sum_{(y,z) \in \caT_x} U(y,z)$.
\end{proposition}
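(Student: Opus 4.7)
The plan is to start from the Kirchhoff representation~\eqref{kirchhoff}--\eqref{kirchhoff1} and compute the logarithmic asymptotics of $W(x)$ term by term, then normalize.

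First I would rewrite each tree weight using the hypothesis $k(y,z) \asymp e^{\beta \phi(y,z)}$ together with the decomposition $-\phi(y,z) = \Gamma(y) + U(y,z)$:
\begin{equation*}
w(\caT_x) \;=\; \prod_{(y,z)\in \caT_x} k(y,z) \;\asymp\; \exp\Bigl[-\beta \sum_{(y,z)\in\caT_x}\bigl(\Ga(y)+U(y,z)\bigr)\Bigr].
\end{equation*}
The crucial combinatorial observation is that in any in-tree $\caT_x$ every node $y\neq x$ has exactly one outgoing edge (pointing toward $x$) while $x$ itself has none. Hence
\begin{equation*}
\sum_{(y,z)\in\caT_x}\Ga(y) \;=\; \sum_{y\neq x}\Ga(y) \;=\; \sum_{y\in K}\Ga(y) \;-\; \Ga(x),
\end{equation*}
which is \emph{independent of the choice of tree} $\caT$. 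That constant factors out of the tree-sum, leaving only the tree-dependent piece $U(\caT_x)$.

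Next I would invoke the elementary Laplace principle: for any finite collection of positive functions $a_i(\beta)$ with $\tfrac{1}{\beta}\log a_i(\beta)\to \alpha_i$, one has $\tfrac{1}{\beta}\log\sum_i a_i(\beta) \to \max_i \alpha_i$. Applied to the finite sum over spanning trees this gives
\begin{equation*}
\frac{1}{\beta}\log W(x) \;\longrightarrow\; \Ga(x) - \sum_{y\in K}\Ga(y) \;-\; \min_\caT U(\caT_x) \;=\; \tilde\Psi(x) - \sum_{y\in K}\Ga(y).
\end{equation*}
A brief comment is needed to justify the Laplace step: the assumption~\eqref{assumption-LD} only gives logarithmic equivalence of rates, so one should check that products and finite sums preserve this; this is standard and follows because the number of spanning trees is finite and independent of $\beta$.

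Finally I would normalize: dividing by $\sum_y W(y)$ and again applying the Laplace principle,
\begin{equation*}
\frac{1}{\beta}\log\sum_{y}W(y)\;\longrightarrow\; \max_{y}\tilde\Psi(y)-\sum_{z\in K}\Ga(z),
\end{equation*}
so the $\sum_z \Gamma(z)$ term cancels and~\eqref{stab} follows. The only real subtlety is the bookkeeping of the $\Ga$-contribution — recognizing that it is tree-independent is what makes the minimum in~\eqref{stab1} involve only $U(\caT_x)$ rather than a more complicated functional; once this is seen the rest is a direct application of Laplace asymptotics to Kirchhoff's formula.
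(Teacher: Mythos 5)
Your proposal is correct and follows essentially the same route as the paper: both pass from Kirchhoff's formula to the logarithmic asymptotics of the tree weights, use the fact that each node $y \neq x$ contributes exactly one outgoing edge in the in-tree so that the $\Ga$-sum is tree-independent, and then apply the Laplace principle to the finite tree-sum and to the normalization. The only difference is that you spell out the combinatorial bookkeeping and the Laplace step more explicitly than the paper does.
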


\begin{proof}
To find the stationary distribution we apply the Kirchhoff formula \eqref{kirchhoff}.
We evaluate
\begin{equation}
\begin{split}
\lim_{\be\to +\infty} \frac{1}{\be}
  \log W(x) &= \lim_{\be\to +\infty} \frac{1}{\be}
  \log\sum_{\caT} \prod_{(y,z) \in \caT_x} e^{\be \phi(y,z)}\\
  &= \lim_{\be\to +\infty} \frac{1}{\be}
  \log\sum_{\caT} \Bigl( \prod_{y \neq x} e^{-\be\Ga(y)} \Bigr)
  \prod_{(y,z) \in \caT_x} e^{-\be U(y,z)}
\\
  &= -\sum_{y \neq x} \Gamma(y) - \min_\caT U(\caT_x)
\\
  &= \tilde\Psi(x) - \sum_y \Ga(y)
\end{split}
\end{equation}
The normalization in~\eqref{kirchhoff} contributes the largest $W(y)$, which ends the proof
of formula \eqref{stab}.
\end{proof}
Other representations are possible 
and depending on the physical context, they may become more natural. As example, for a network made of metastable states well-separated by energy barriers, we could write
%Other representations are possible.
%Suppose indeed that we  write for~\eqref{assumption-LD},
$\phi(x,y) = E(x) - \Delta(x,y)$ where $E(x)$ can be interpreted as the energy of state $x$ and $\Delta(x,y)$ as an effective energy barrier between $x$ and $y$.  At \emph{global} detailed balance $\Delta(x,y) = \Delta(y,x)$,
and under \emph{local} detailed balance
$\Delta(x,y) - \Delta(y,x)$ corresponds to the work of additional non--conservative forces in the transition $x\rightarrow y$.
Formula \eqref{stab} can now be written as
\[
\rho(x) \asymp e^{\beta[\Omega - E(x) - \Theta(x)]}
\]
where
\[
\Theta(x) := \min_{\caT}\sum_{(y,z)\in \caT_x}\Delta(y,z), \quad \Omega := \min_x[E(x) + \Theta(x)]
\]
At global detailed balance (symmetric $\De$), $\Th(x)$ becomes a constant determined by the
minimizing spanning tree and we recover the Boltzmann distribution.\\

In the asymptotic regime $\beta\to +\infty$, some transitions are typical whereas the others become exponentially damped. Indeed, when the system makes a jump from $x$ then the probability to go to a state $y$ asymptotically goes like
$p(x,y) = k(x,y) / \xi(x) \asymp e^{-\be U(x,y)}$. For any $x$ there is always at least one other state $y$ such that $U(x,y) = 0$ --- we call these states \emph{preferred successors of $x$} and we consider the digraph $\caG^D$ made of all states and directed arcs indicating all preferred successors. Clearly, those transitions which are in $\caG$ but do not correspond to a directed arc in $\caG^D$ become suppressed at low temperatures.  Note that $\caG^D$ may not be (even weakly)
connected and that it may contain both arcs $(x,y)$ and $(y,x)$.\\

Let us denote
$\Psi(x) := \tilde\Psi(x) - \max_y \tilde\Psi(y)$, i.e.,
$\rho(x) \asymp e^{\be\Psi(x)}$.
By construction,
$\Psi(x) \leq 0$ and there is always at least one state $x^*$ such that
$\Psi(x^*) = 0$.
Adopting the terminology of~\cite{jac},
we call these states \emph{dominant}: the stationary occupation of the set of all dominant states is exponentially close to unity in the limit
$\be \to +\infty$.
That resembles the problem of stochastic stability but note that we do not have in general a unique and well-defined zero temperature dynamics.
Dominant states are the analogue of (dominant) ground states for equilibrium statistical models, \cite{bs}. Yet, here we do not minimize the (free) energy;
the dominance of a state follows from its long life-time (the first term on the right-hand side of~\eqref{stab1}) and its accessibility from other states (the second term).
%Loosely speaking, dominant states are less noisy;
For a related analysis in terms of radius and coradius, see~\cite{ell}.

The easiest situation appears when there is no frustration between both terms: a state
$x^*$ is called \emph{absolutely dominant} if (i) it has a maximal life-time among all states, $\Ga(x^*) = \max_y \Ga(y)$, and (ii) there exists an in-tree
$\caT_{x^*} \subset \caG^D$ (i.e., a tree which makes the second term to vanish). Clearly, if there exists an absolutely dominant state then \emph{all} dominant states are absolutely dominant.

Let us finish with an important special case. Assume that the digraph $\caG^D$ is strongly connected, i.e., for any two states $x, y$ there exist both paths from $x$ to $y$ and from $y$ to $x$ along the directed edges of $\caG^D$. In such an ``Escherian world''~\cite{es}, for \emph{every} state $x$ there exists an in-tree
$\caT_x \subset \caG^D$ and hence
$\tilde\Psi(x) = \Ga(x)$, i.e.,
$\rho(x) \asymp \exp\,[\be(\Ga(x) - \max_y \Ga(y))]$. In this case the stationary occupations are entirely determined by the life-times alone, the least frenetic ones. Informally, that is due to the absence of barriers which makes all states `equally accessible from everywhere'. Those states with maximal life-time are then absolutely dominant.
We note that this ``Escherian'' case need not contradict detailed balance, though these notions can only meet under certain degeneracy conditions: Since the former implies
$\rho(x)\,k(x,y) \asymp \exp[-\be(U(x,y) + \max_z \Ga(z))]$, global detailed balance 
%is equivalent to 
requires 
$U(x,y) = U(y,x)$ and hence $(x,y) \in \caG^D$ whenever
$(y,x) \in \caG^D$. For $|K| > 2$ it requires that states with multiple preferred successors exist.\\
Continuing Example~\ref{ex:walk}, let us assume $q_x > 0$ and
$p_x \asymp 1$ (i.e., with essentially temperature-independent activation factors) for all $x = 1,\ldots, N$. That corresponds to
$\phi(x,x+1) = q_x/2$, $\phi(x,x-1) = -q_{x-1}/2 $, and hence
$\Ga(x) = -q_x/2$ and $U(x,x+1) = 0$, $U(x,x-1) = (q_x + q_{x-1})/2 > 0$. This is a simple ``Escherian'' case with the strongly connected digraph
$\caG^D = \{(x,x+1) \text{ for all $x$}\}$ and the asymptotic stationary occupation
$\rho(x) \asymp \exp(-\be q_x / 2 + \text{const})$. The states $x$ minimizing $q_x$ over the circle are absolutely dominant.

\section{Conclusions}

We have given bounds on the relative stationary occupations for Markov jump processes in terms of the released heat over paths in the Kirchhoff graph.  These heat bounds are optimal as they either resolve which of the states are more plausible, or it cannot be decided at all from heat alone and  other (kinetic) aspects of the dynamics essentially matter.  That has lead to a precisely stated blowtorch theorem. Finally, we have given the exponential low-temperature asymptotics of the occupations of states through their life-time and their reachability within the network. We expect such a representation in terms of dominant states and their excitations to be of help in the theory of strongly nonequilibrium processes under low-temperature conditions.

\begin{acknowledgements}
We are grateful to Jacek Miekisz for pointing out the work in \cite{jac} and in \cite{ell}.
KN acknowledges the support from the Grant Agency of the Czech Republic, Grant no.~P204/12/0897.
\end{acknowledgements}

% -----------------------------------------------


\begin{thebibliography}{10}

\bibitem{Sch}
J.~Schnakenberg, Network theory of microscopic and macroscopic behavior of master equation systems.
\emph{Rev. Mod. Phys.} {\bf 48}, 571--585 (1976).

\bibitem{qia}
D.~Q.~Jiang, M.~Qian and  M.~P.~Qian, Mathematical theory of nonequilibrium steady states
- On the frontier of probability and dynamical systems.
\emph{Lect. Notes Math.} \textbf{1833} (Springer-Verlag, Berlin, 2004).

\bibitem{alt}
B.~Altaner, S.~Grosskinsky, S.~Herminghaus, L.~Katth\"an, M.~Timme and J. Vollmer, Network representations of non-equilibrium steady states: Cycle decompositions, symmetries and dominant paths.
{\tt arXiv:1105.2178v2 [math-ph]}.

\bibitem{land}
R.~Landauer, Inadequacy of entropy and entropy derivatives in characterizing the steady state.
\emph{Phys. Rev. A} {\bf 12}, 636--638 (1975).

\bibitem{schu}
B.~Shubert, A flow-graph formula for the stationary distribution of a Markov chain.
\emph{IEEE Trans. Syst. Man. Cybernet.} {\bf 5}, 565 (1975).

\bibitem{jac}
J.~Miekisz, Stochastic Stability in Spatial Games.
\emph{J. Stat. Phys.} {\bf 117}, 99--110 (2004).\\
---, Evolutionary game theory and population dynamics,
in \emph{Multiscale Problems in the Life Sciences, From Microscopic to
Macroscopic}, V. Capasso and M. Lachowicz (eds.), Lecture Notes in
Mathematics \textbf{1940}, 269--316 (2008).

\bibitem{bol}
B.~Bollobas,
\textsl{Modern Graph Theory}
(Springer-Verlag, Berlin, 1998).

\bibitem{kls}
S.~Katz, J.~L.~Lebowitz and H.~Spohn,
Stationary nonequilibrium states for stochastic lattice gas models of ionic superconductors.
\emph{J. Stat. Phys.} \textbf{34}, 497--537 (1984).

\bibitem{hal}
H.~Tasaki, Two theorems that relate discrete stochastic processes to microscopic mechanics,
{\tt arXiv:0706.1032v1 [cond-mat.stat-mech]}.

\bibitem{mnw}
C.~Maes, K.~Neto\v{c}n\'y and B.~Wynants, On and beyond entropy production: the case of Markov jump processes.
\emph{Markov Processes and Related Fields} {\bf 14}, 445--464 (2008).

\bibitem{fw}
M.~I.~Freidlin and A.~D.~Wentzell, 
\textsl{Random perturbations of dynamical systems}. 
Grundlehren der Mathematischen Wissenschaften {\bf 260} 
%(Second edition ed.) 
(New York, Springer-Verlag, 1998).


\bibitem{mw}
C.~Maes and B.~Wynants, On a response formula and its interpretation.
\emph{Markov Processes and Related Fields} {\bf 16}, 45--58 (2010).

\bibitem{mnw}
C.~Maes, K.~Neto\v{c}n\'{y} and B.~Wynants,
Monotone return to steady nonequilibrium.
\emph{Phys. Rev. Lett.} {\bf 107}, 010601 (2011).

\bibitem{dv}
M.~D.~Donsker and S.~R.~Varadhan,
Asymptotic evaluation of certain Markov process expectations for large time I.
\emph{Comm. Pure Appl. Math.} \textbf{28}, 1--47 (1975).

\bibitem{bs}
J.~Bricmont and J.~Slawny,
Phase transitions in systems with a finite number of dominant ground states.
\emph{J. Stat. Phys.} \textbf{54}, 89--161 (1989).

\bibitem{ell}
G.~Ellison,
Basins of attraction, long-run stochastic stability, and the speed of step-by-step evolution,
\emph{Rev. Econ. Stud.} \textbf{67}, 17--45 (2000).

\bibitem{es}
M.~C.~Escher, Waterfall (1961).




\end{thebibliography}
\end{document}